\documentclass[aps,physrev,twocolumn,superscriptaddress]{revtex4-2}
\usepackage[utf8]{inputenc}
\usepackage{times}
\usepackage{mathtools}
\usepackage{amsmath,amsfonts,amssymb}
\usepackage{array,graphicx,booktabs}
\usepackage[utf8]{inputenc}
\usepackage[T1]{fontenc}
\usepackage{url}
\usepackage{theorem}

\theoremstyle{break}
\theorembodyfont{\itshape}
\newtheorem{theorem}{Theorem}

\theorembodyfont{\normalfont}
\newtheorem{definition}[theorem]{Definition}

\theorembodyfont{\itshape}

\newenvironment{proof}[1][Proof]{\par\noindent\textit{#1.}\ }{\hfill $\square$\par}

\newcommand{\Ftwo}{\mathbb F_2}

\newcommand{\GL}{\operatorname{GL}}

\allowdisplaybreaks

\newcounter{appsection}
\renewcommand{\theappsection}{\Alph{appsection}}
\newcounter{appsubsection}[appsection]
\renewcommand{\theappsubsection}{\theappsection.\arabic{appsubsection}}
\newcommand{\appsection}[2]{%
    \refstepcounter{appsection}%
    \section*{Appendix~\theappsection: #2}%
    \label{#1}%
}
\newcommand{\appsubsection}[2]{%
    \refstepcounter{appsubsection}%
    \subsection*{\texorpdfstring{\theappsubsection\quad #2}{\theappsubsection\space #2}}%
    \label{#1}%
}

\usepackage[
    breaklinks,
    pdftex,
    colorlinks=true,
    linkcolor=blue,
    citecolor=blue,
    urlcolor=blue
]{hyperref}
\usepackage{cleveref}

\begin{document}

\title{Logical Entangling with Phantom Codes in Hypergraph Products}

\author{Keming He}
\thanks{These authors contributed equally to this work.}
\affiliation{QudeLeap Research, Shanghai 200030, China.}
\affiliation{Thrust of Artificial Intelligence, Information Hub, \\ The Hong Kong University of Science and Technology (Guangzhou), Guangdong 511453, China.}

\author{Ziao Tang}
\thanks{These authors contributed equally to this work.}
\affiliation{Thrust of Artificial Intelligence, Information Hub, \\ The Hong Kong University of Science and Technology (Guangzhou), Guangdong 511453, China.}
\author{Zetong Li}
\affiliation{Thrust of Artificial Intelligence, Information Hub, \\ The Hong Kong University of Science and Technology (Guangzhou), Guangdong 511453, China.}
\author{Ge Bai}
\email{gebai@hkust-gz.edu.cn}
\author{Xin Wang}
\email{felixxinwang@hkust-gz.edu.cn}
\affiliation{Thrust of Artificial Intelligence, Information Hub, \\ The Hong Kong University of Science and Technology (Guangzhou), Guangdong 511453, China.}

\begin{abstract}
Logical entangling gates are a major source of physical spacetime overhead in fault-tolerant quantum computation. Phantom codes reduce this cost by implementing every ordered in-block logical CNOT through physical qubit permutations and Pauli-frame updates. Whether this mechanism can coexist with the low-weight stabilizer structure of qLDPC codes is a central question for low-overhead fault-tolerant architectures. We give a deterministic answer within binary CSS hypergraph product (HGP) codes. Up to natural equivalences, the simplex–repetition family is the unique HGP family satisfying the phantom condition. We then evaluate this family under circuit-level noise in logical GHZ-state preparation and Trotterized many-body quantum simulation. The codes retain low-weight stabilizer checks and yield concrete advantages over rotated surface-code baselines in both benchmarks. Reconfigurable neutral-atom arrays offer a natural setting for this approach, supporting nonlocal qLDPC operations while enabling in-block logical CNOTs without additional physical operations. Together, these results make precise how permutation-based logical entangling constrains code design within the HGP framework, demonstrate the circuit-level benefits of the unique family, and guide the search for phantom qLDPC families with better asymptotic parameters for low-overhead fault tolerance on neutral-atom hardware.
\end{abstract}
\maketitle


\emph{Introduction.}---
Quantum error correction has recently achieved major experimental advances
across multiple hardware platforms, particularly superconducting,
neutral-atom, and trapped-ion systems
\cite{tham2026breakeven,perlin2026fault,lin2026surface,
computing2026quantum,wang2026demonstration,rosenfeld2025magic,
dasu2026computing,google2025belowthreshold,bluvstein2024logical,
bluvstein2026fault,lacroix2025scaling,putterman2025hardware,
ryan2024high}.
These advances are bringing fault-tolerant logical devices to the center of
quantum computing research. Fault tolerance enables long computations by
encoding information into quantum error-correcting codes and repeatedly
correcting errors before they accumulate into logical faults
\cite{Shor1996FaultTolerant,AharonovBenOr2008Constant}.
In this regime, the physical spacetime volume required to implement encoded
computation becomes a central component of its cost. Syndrome extraction,
decoding, routing, idle protection, and additional error-correction rounds all
contribute to this overhead. Reducing these physical resources is therefore
essential for scaling fault-tolerant quantum computation.

A major approach to this challenge develops sparse quantum codes with
low-weight stabilizer checks, favorable encoding rates, and scalable decoding,
most notably within the qLDPC program
\cite{bravyi2014homological,panteleev2021quantum,
panteleev2022asymptotically,breuckmann2021balanced,
leverrier2022quantum,dinur2023good}.
Within this landscape, hypergraph product~(HGP) codes provide a particularly natural
setting. They construct sparse CSS stabilizer codes from classical
parity-check matrices and admit a transparent algebraic description of their
logical operator spaces
\cite{TillichZemor2014,krishna2021fault}.
These features make HGP codes a natural setting for studying how code
structure constrains and enables the physical implementation of logical gates.

Logical entangling layers are indispensable to fault-tolerant Clifford
computation, logical state preparation, entanglement generation, and
measurement-based routines, and they can contribute substantially to the
physical spacetime overhead of these tasks. Their implementation commonly uses
transversal gates between code blocks, homological gadgets, code
deformation, or other architecture-dependent procedures
\cite{Fowler2012Surface,Horsman2012LatticeSurgery,cowtan2026parallel,
xu2025fast,ide2025fault}.
This overhead has motivated logical-gate constructions that exploit additional
structure already present in the code. Addressable logical gates, automorphism
gates, and structured physical permutations can realize nontrivial logical
operations through code symmetries
\cite{he2025asymptotically,sayginel2025fault,davydova2024quantum}.
Such constructions can substantially reduce the cost of logical gates, while
typically retaining active physical operations. 

Phantom codes use code symmetries to implement every ordered in-block logical
CNOT by a physical qubit permutation together with a Pauli-frame update. Ref.~\cite{Koh2026Phantom} introduced this concept,
constructed explicit families, and demonstrated its possible circuit-level
advantages. Other work has shown that this mechanism strongly constrains the
available code parameters. \citet{MorrisMalz2026PhantomBounds} derived strong upper bounds on the
number of logical qubits that can be encoded by binary phantom codes. \citet{guyot2026addressability} derived related
limitations on implementing logical gates by physical qubit permutations in
high-rate CSS codes. Together, these results show that implementing logical gates by physical qubit
permutations places strong constraints on qLDPC code design. They
motivate asking how much design freedom remains when the phantom condition is
imposed within a standard qLDPC construction.

\begin{figure}[t]
\centering
\includegraphics[width=\columnwidth]{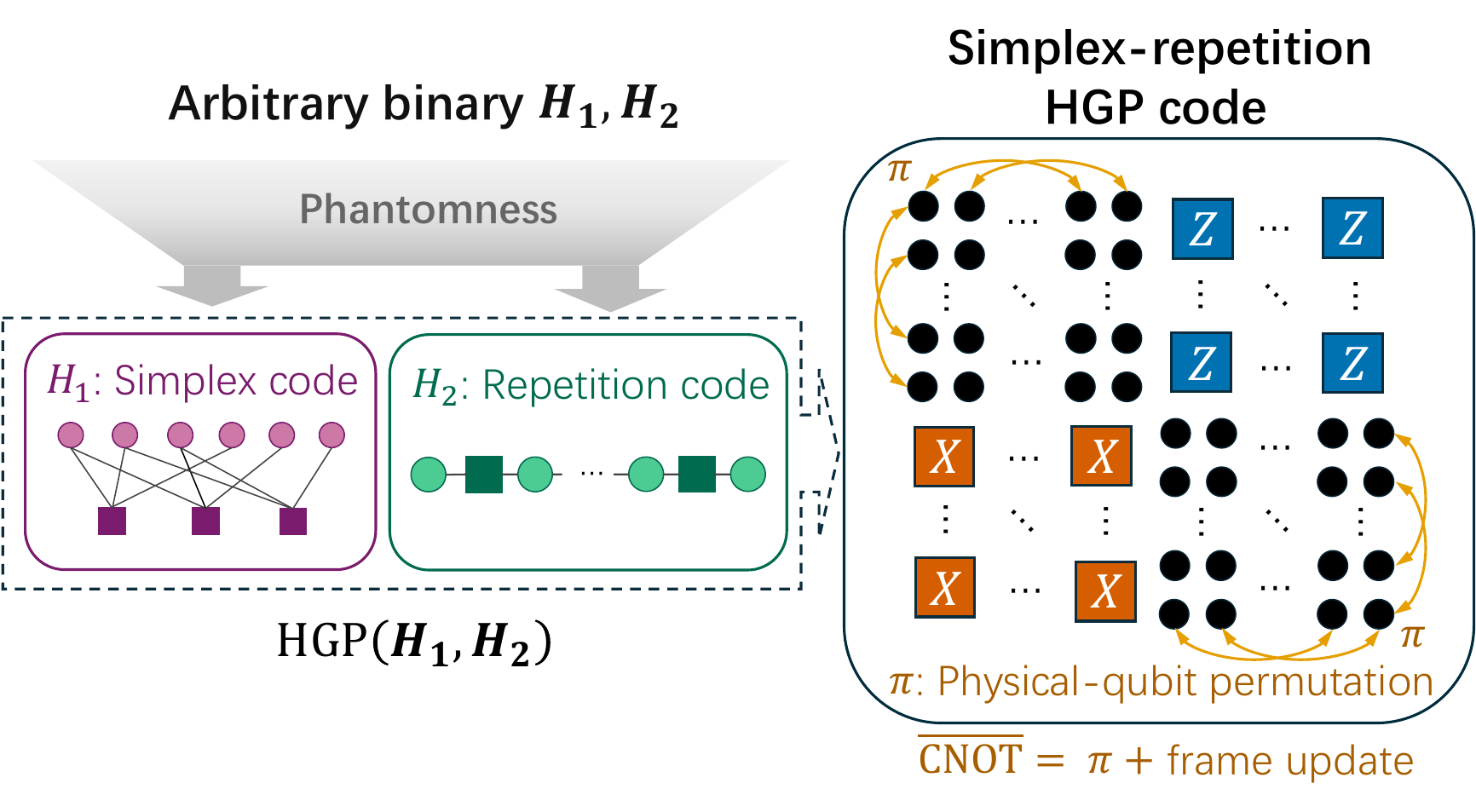}
\caption{
Schematic illustration of the simplex--repetition HGP phantom code.
The gray funnel represents the classification result. Up to natural
equivalences, the phantom condition selects the simplex--repetition family
among binary CSS HGP constructions. The two classical seed codes $H_1, H_2$ are the
simplex code and the repetition code. Their hypergraph product gives the corresponding
HGP phantom code. Black dots denote physical qubits, while orange and blue
squares denote \(X\)- and \(Z\)-checks, respectively. Orange arrows depict a
physical qubit permutation \(\pi\). For each ordered in-block logical CNOT, an appropriate permutation \(\pi\)
and a Pauli-frame update realize the logical operation.
}
\label{fig:intro_diagram}
\end{figure}

In this work, we give a complete classification of binary CSS HGP phantom
codes. We prove that, up to natural equivalences, the simplex--repetition
family is the unique binary CSS HGP family satisfying the phantom condition.
This result reduces a broad search over arbitrary HGP inputs to a single family
and complements existing constructions and general bounds for
permutation-based logical gates by providing an exact classification. We then
test the circuit-level performance of this family in logical GHZ preparation
and Trotterized many-body quantum simulation. The simplex--repetition HGP codes
retain low-weight stabilizer checks and yield concrete advantages over
distance-matched rotated surface-code baselines in both benchmarks because
their in-block logical CNOTs are implemented by physical qubit permutations
and Pauli-frame updates. This combination is well suited to neutral-atom architectures:
reconfigurable connectivity can support the nonlocal operations required by
qLDPC circuits, while the phantom in-block CNOTs are handled through classical
relabelling and Pauli-frame tracking. By reducing the full HGP design space to a single
family, our classification makes precise how permutation-based logical
entangling constrains code design within a standard qLDPC framework. It also
motivates the search for phantom code families with better asymptotic
parameters in broader qLDPC constructions.
Figure~\ref{fig:intro_diagram} gives a schematic view of this family and the
implementation of its logical CNOTs by physical qubit permutations.

\emph{Uniqueness of phantom codes in HGP codes.}---
We now classify binary CSS hypergraph product (HGP) codes satisfying the
phantom condition. Let
\(H_1\in\mathbb F_2^{r_1\times n_1}\) and
\(H_2\in\mathbb F_2^{r_2\times n_2}\) be two binary parity-check matrices, and
let
\(\mathcal C=\mathrm{HGP}(H_1,H_2)\)
denote the corresponding HGP code. Its check matrices are
\begin{equation}
\begin{aligned}
H_X&=
\begin{bmatrix}
H_1\otimes I_{n_2} & I_{r_1}\otimes H_2^{T}
\end{bmatrix},
\\
H_Z&=
\begin{bmatrix}
I_{n_1}\otimes H_2 & H_1^{T}\otimes I_{r_2}
\end{bmatrix}.
\end{aligned}
\label{eq:hgp_checks}
\end{equation}
For a binary matrix \(M\), \(\operatorname{row}(M)\) and \(\ker M\) denote
its row space and kernel, respectively. The \(X\)- and \(Z\)-stabilizer spaces are
\begin{equation}
S_X=\operatorname{row}(H_X),
\quad
S_Z=\operatorname{row}(H_Z).
\end{equation}
The \(X\)-logical and \(Z\)-logical operator spaces are
\begin{equation}
L_X=\ker H_Z/S_X,
\quad
L_Z=\ker H_X/S_Z.
\end{equation}

Following Ref.~\cite{Koh2026Phantom}, we recall the definition of a CSS
phantom code.
\begin{definition}[CSS phantom code~\cite{Koh2026Phantom}]
A \([[n,k,d]]\) CSS code is a CSS phantom code if the logical gate
\(\overline{\mathrm{CNOT}}_{ab}\) for every ordered pair of logical qubits
\((a,b)\in [k]^2\) can be implemented via qubit permutations, for some choice
of logical basis.
\end{definition}

A physical qubit permutation is called CSS-preserving if it maps \(X\)-type
stabilizers to \(X\)-type stabilizers and \(Z\)-type stabilizers to \(Z\)-type
stabilizers. Such a permutation induces a well-defined linear action on the
logical Pauli quotients. In this work, we consider the stronger condition that
the induced action is transitive on the nonzero logical classes.

The main result of this section is the following. Here and below, trivial transformations mean exchanging the two input codes,
passing to the transpose-side variant, and adding or removing zero or repeated
coordinates.

\begin{theorem}[Uniqueness of HGP phantom codes]
\label{thm:uniqueness_hgp_phantom}
Up to trivial transformations, the HGP phantom codes are exactly
the HGPs of a simplex code and a repetition code.
\end{theorem}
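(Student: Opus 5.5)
The plan is to prove the two directions separately. The forward direction, phantom implies simplex--repetition, carries almost all the weight. For it I would not use the CNOT gates directly. Instead I would use one consequence of the transitivity condition stated just before the theorem. A CSS-preserving permutation $\pi$ preserves Hamming weight, and it maps $S_X$ to $S_X$ and $\ker H_Z$ to $\ker H_Z$. Hence for every nonzero class $\ell\in L_X$,
\[
\mathrm{wt}_{\min}(\ell)=\mathrm{wt}_{\min}(\pi\ell).
\]
Transitivity then forces every nonzero class of $L_X$ to have the same minimum weight, namely $d_X$, and likewise every nonzero class of $L_Z$ has minimum weight $d_Z$. The argument reduces to showing that this \emph{equal-weight} property singles out the simplex--repetition family.

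First I would fix the standard K\"unneth description of the HGP logical space:
\[
L_X\cong\bigl(\ker H_1\otimes \operatorname{coker}H_2^{T}\bigr)\oplus\bigl(\operatorname{coker}H_1\otimes\ker H_2^{T}\bigr),
\]
with the analogous decomposition for $L_Z$. Each sector comes with canonical representatives supported on the left block ($n_1n_2$ qubits) or the right block ($r_1r_2$ qubits). The key lemma is a lower bound for the minimum weight of a class, by slicing. Write a representative in the first sector as a matrix on the $n_1\times n_2$ block plus a correction from $S_X$. For each coordinate $j$ of the dual witness, project onto the corresponding slice. Each slice that pairs nontrivially with a $Z$-logical must contain a nonzero codeword of $\ker H_1$ in the right coset. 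From this, the minimum weight of a class with tensor rank $\rho$ is at least $\rho\, d(\ker H_1)$. A class with components in both sectors has weight at least the sum of the sector bounds. The matching upper bound is attained by the canonical product representatives: for a rank-one class $c\otimes b$, the weight is $\mathrm{wt}(c)$ times the minimal weight of the coset of $b$.

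Next I would run the case analysis that this lemma permits. Suppose both sectors are nonzero. A class $\ell_1+\ell_2$ with $\ell_i$ in sector $i$ then has weight strictly larger than $\ell_1$ alone, which contradicts equal weights. Using the transpose-side trivial transformation, I may therefore assume the second sector vanishes. Similarly, if $\dim\operatorname{coker}H_2^{T}\ge2$, a rank-two class is strictly heavier than a rank-one class. Hence $\dim\operatorname{coker}H_2^{T}=1$, after exchanging inputs if needed.

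The nonzero $X$-classes are then exactly $c\otimes b_0$ with $c\in\ker H_1$, and their weights are proportional to $\mathrm{wt}(c)$. So $\ker H_1$ is a constant-weight linear code. By Bonisoli's theorem it is a replicated simplex code, possibly with zero columns, and removing zero or repeated coordinates gives exactly the simplex code. Applying the same argument to $L_Z$ forces the one-dimensional factor on the $H_2$ side to be spanned by a constant-weight vector. After deleting zero and repeated coordinates this makes $H_2$ the repetition code, and the vanishing of the other sector is consistent with this.

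For the converse I would note that $\mathrm{GL}(k,\mathbb F_2)$ acts on the simplex coordinates (the nonzero vectors of $\mathbb F_2^k$) by permutations that preserve the simplex code. Tensoring such a permutation with the identity on the repetition factor gives a CSS-preserving permutation of $\mathrm{HGP}(H_1,H_2)$. Its logical action is the corresponding element of $\mathrm{GL}(k)$ on $L_X$, with the contragredient action on $L_Z$. This group is transitive on nonzero vectors and contains every transvection, so every $\overline{\mathrm{CNOT}}_{ab}$ is realized.

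The step I expect to be hardest is the slicing lower bound in the presence of the stabilizer corrections coming from the $I_{r_1}\otimes H_2^{T}$ block. That block can move weight into the right-hand block, so the strict inequalities for mixed and rank-two classes must survive this. My first attempt would be to pair each slice with a dual $Z$-logical $c'\otimes e$, chosen to witness nontriviality coordinate by coordinate. If a clean bound fails there, I would fall back to comparing average weights over orbits, which transitivity also forces to be equal.
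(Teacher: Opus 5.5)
Your overall route matches the paper's: invariance of the minimum coset weight, the K\"unneth splitting, excluding mixed classes, excluding rank-two classes in the surviving sector, then Bonisoli for the $k$-dimensional factor. The gap is in the rank-two step. There you rely on the lemma that a class of tensor rank $\rho$ has weight at least $\rho\,d(\ker H_1)$, and this lemma is false. Slice the left block along the coordinates $v$ of the kernel factor. For any representative of $c\otimes[b]+c'\otimes[b']$, the slice at $v$ represents the coset $c_v[b]+c'_v[b']$. On $\operatorname{supp}(c)\cap\operatorname{supp}(c')$ a single slice serves both tensor terms. Slicing therefore gives only $\alpha\,|\operatorname{supp}(c)\cup\operatorname{supp}(c')|$, where $\alpha$ is the minimum weight of a nonzero coset.

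The true minimum can in fact fall below $2d(\ker H_1)$. Take your sector $\ker H_1\otimes\operatorname{coker}H_2^T$, with stabilizers $(MH_2,H_1M)$. Let $H_1$ be a full-rank $4\times 7$ check matrix of the $[7,3,4]$ simplex code, and let $H_2=[1\;1\;1]$. Choose $c,c'$ with $\operatorname{supp}(c)\cap\operatorname{supp}(c')=\{p,q\}$. Adding the stabilizer with $M=e_p+e_q\in\Ftwo^{7}$ to $(ce_1^T+c'e_2^T,0)$ gives a representative of weight $6+\operatorname{wt}(h_p+h_q)$, where $h_v$ are the columns of $H_1$. This is $7<8$ once the rows of $H_1$ are chosen so that $h_p+h_q$ has weight one.

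The missing idea is to use transitivity on rank-one classes \emph{before} the rank-two comparison. If both factors have dimension at least two, fix one factor at a time. This shows that every nonzero codeword of the kernel factor has the same weight $\beta$ and every nonzero coset the same minimum weight $\alpha$, so $w_0=\alpha\beta$. Then $\operatorname{wt}(c)=\operatorname{wt}(c')=\operatorname{wt}(c+c')=\beta$ forces $|\operatorname{supp}(c)\cap\operatorname{supp}(c')|=\beta/2$, so the union has size $3\beta/2$. The rank-two class therefore weighs at least $\tfrac32 w_0>w_0$. This is the paper's argument, and it makes your orbit-average fallback unnecessary.

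There are also some smaller points.
\begin{itemize}
\item The obstacle you anticipate from the $I_{r_1}\otimes H_2^T$ block does not arise. Each slice is a coset representative whatever stabilizer is added, and weight pushed into the other block only increases the total. So you bound on one block and discard the other. This is how the paper gets both the $2w_0$ bound for mixed classes and the rank-two bound.
\item For the mixed step, the per-block bound must be the sharp rank-one value $\operatorname{wt}(c)\,\delta([b])$, not $d(\ker H_1)$.
\item The phantom definition gives you CNOTs, not transitivity. You need the one-line observation that for $k\ge 2$ the transvections $I+e_je_i^T$ act transitively on nonzero logical classes.
\item In the converse, a permutation of $\Ftwo^k\setminus\{0\}$ that merely preserves $\ker H_1$ does not define a permutation of $C_1\times C_2$. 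You need $H_1\sigma_A=\pi_AH_1$. This holds for the $\GL(k,\Ftwo)$-invariant check matrix whose rows are all triples $\{u,v,u+v\}$, and that is the presentation the paper uses.
\item The repetition factor needs no second pass through $L_Z$. A one-dimensional kernel, with its zero coordinates deleted, is already the repetition code.
\end{itemize}
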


Theorem~\ref{thm:uniqueness_hgp_phantom} shows that, within binary CSS HGP
codes, requiring all ordered in-block logical CNOTs to be implemented by
physical qubit permutations singles out the simplex--repetition family.

The theorem below provides the structural input behind this main result.

\begin{theorem}[Transitive permutation actions in HGP codes]
\label{thm:hgp_transitive_action}
Let \(\mathcal C=\mathrm{HGP}(H_1,H_2)\). Let
\(\operatorname{Aut}_{\mathrm{perm}}(\mathcal C)\) be its permutation
automorphism group, and let
\begin{equation}
\rho_X:\operatorname{Aut}_{\mathrm{perm}}(\mathcal C)\to GL(L_X)
\end{equation}
be the induced action of qubit permutations on \(X\)-logical operators. If
\(\rho_X(\operatorname{Aut}_{\mathrm{perm}}(\mathcal C))\) acts transitively on
\(L_X\setminus\{0\}\), then, up to trivial transformations, the \(X\)-logical
sector is the one obtained from the HGP of a simplex code and a repetition
code. The same statement holds for \(Z\)-logical operators.
\end{theorem}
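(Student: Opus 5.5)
We will use the Künneth-type description of the HGP logical space. Write $C_i=\ker H_i$ and $C_i^T=\ker H_i^T$. Then
\begin{equation}
L_X\cong\bigl(\ker H_1\otimes \operatorname{coker} H_2^{T}\bigr)\oplus\bigl(\operatorname{coker}H_1\otimes \ker H_2^{T}\bigr),
\end{equation}
with $k_X=k_1k_2^T+k_1^Tk_2$. After the trivial transformations (removing zero and repeated coordinates, swapping inputs, passing to the transpose side) we may assume each $H_i$ has no zero or repeated columns. We may also discard degenerate summands, so that we reduce to the case where, say, $k_1^T=0$, i.e.\ $H_1$ has full row rank. We then have $L_X\cong \ker H_1\otimes \operatorname{coker}H_2^T$, and the canonical $X$-logical representatives live on the left sector $[n_1]\times[n_2]$.

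The first step is to control $\operatorname{Aut}_{\mathrm{perm}}(\mathcal C)$. A CSS-preserving permutation maps minimum-weight logicals to minimum-weight logicals. Transitivity on $L_X\setminus\{0\}$ is much stronger: it forces \emph{every} nonzero logical class to have the same minimum weight and the same multiset of weight profiles. In particular, every nonzero class must have weight $d_X$. Apply this to product representatives $x\otimes y$ with $x\in\ker H_1$ and $y$ a coset representative of $\operatorname{coker}H_2^T$. The minimum weight of the class of $x\otimes y$ is of order $|x|\cdot d(y)$. Meanwhile a sum of non-product tensors such as $x\otimes y+x'\otimes y'$ has weight at least $\min$ over stripes. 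Equality of weights across all classes then forces two things:
\begin{enumerate}
\item every nonzero codeword of $\ker H_1$ has the same weight, which makes it a constant-weight linear code;
\item $\operatorname{coker}H_2^T$ is one-dimensional (otherwise rank-2 tensors would have strictly larger minimum weight than rank-1 tensors; compare $x\otimes y_1+x'\otimes y_2$ with $x\otimes y_1$).
\end{enumerate}
Using Bonisoli's theorem, a constant-weight linear code without zero or repeated coordinates is a simplex code. The one-dimensional cokernel condition, together with the removal of redundant coordinates and the requirement that the distance be attained uniformly, reduces $H_2$ to the repetition-code check matrix (a cycle/path parity check whose transpose has one-dimensional cokernel). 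The symmetric case, where the roles are swapped, is exactly the trivial transformation.

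The $Z$ statement follows by the same argument applied to the dual chain complex, i.e.\ by swapping $H_i\leftrightarrow H_i^T$.

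The main obstacle is the rank-1 versus rank-$\geq2$ weight comparison. It requires a clean lower bound on the minimum weight of a class in $\ker H_1\otimes\operatorname{coker}H_2^T$ modulo $S_X$, in terms of tensor rank, showing strict inequality whenever $\dim\ge2$ on the cokernel side. We would attempt this via the standard HGP distance argument: restrict to columns/rows and project each stripe onto $\operatorname{coker}$.
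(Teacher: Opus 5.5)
Your outline follows the paper's route: invariance of the minimum coset weight, the K\"unneth decomposition, a product formula for decomposable classes, a rank-two contradiction, and then Bonisoli. It has a genuine gap at the step ``we may also discard degenerate summands, so that we reduce to the case $k_1^T=0$.'' This is not a normalization but a case the transitivity hypothesis must exclude. When both K\"unneth summands are nonzero (e.g.\ both $H_1$ and $H_2$ have redundant rows and nontrivial kernels), swapping inputs or passing to the transpose side only relabels the summands, and discarding one changes the code and its logical space. The paper treats this as the first of its two contradictions. Take nonzero decomposable classes $u$ and $v$, one in each summand. The stripe lower bound for $u$ holds for an arbitrary stabilizer shift and involves only $Q_L$, because each stripe in the support of the kernel factor represents the same fixed nonzero quotient class. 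The analogous bound for $v$ involves only $Q_R$. Since $Q_L$ and $Q_R$ are disjoint, every representative of $u+v\neq 0$ has weight at least $\omega(u)+\omega(v)=2w_0>w_0$, which contradicts the constancy of $\omega$.

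The rank-two comparison, which you leave open, is the second missing piece, and ``at least $\min$ over stripes'' is not the right bound. You first need the exact formula $\omega([x\otimes y])=\operatorname{wt}(x)\,\delta([y])$, not just ``of order''; here $\delta$ is the minimum weight in the quotient class. The lower bound comes stripe by stripe. The upper bound comes from the representative $x\otimes y_{\min}$, which is reached by a stabilizer whose other-sector part vanishes because $x\in\ker H_1$. If both factors have dimension at least two, fixing a nonzero element of one factor shows that $\operatorname{wt}$ is constant ($=\beta$) on nonzero kernel vectors and $\delta$ is constant ($=\alpha$) on nonzero quotient classes, so $w_0=\alpha\beta$. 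Now consider $x\otimes y_1+x'\otimes y_2$ with both pairs independent. In any representative, every stripe indexed by $\operatorname{supp}(x)\cup\operatorname{supp}(x')$ represents the nonzero class $x_i[y_1]+x'_i[y_2]$. The weight is therefore at least the \emph{sum} $\alpha\,|\operatorname{supp}(x)\cup\operatorname{supp}(x')|$. Since $x\neq x'$ both have weight $\beta$, this union exceeds $\beta$; the paper gets exactly $3\beta/2$ from $\operatorname{wt}(x+x')=\beta$. Hence $\omega>w_0$. Note that this shows only that \emph{one} of the two factors is one-dimensional; the remaining case split is then handled by exchanging inputs, as you indicate.

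Smaller points:
\begin{itemize}
\item Your displayed decomposition is the paper's $L_Z$, not $L_X$. This is harmless by symmetry, but its dimension is $k_1k_2+k_1^Tk_2^T$, not $k_1k_2^T+k_1^Tk_2$.
\item Removing zero or repeated \emph{columns of $H_i$} is not a trivial transformation, since a zero column is a weight-one codeword. The paper deletes zero and repeated coordinates of the classical codes, and only at the end.
\item Only $\ker H_2$, not $H_2$ itself, is forced to be the repetition code.
\end{itemize}
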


We first sketch the proof of Theorem~\ref{thm:hgp_transitive_action}, from
which Theorem~\ref{thm:uniqueness_hgp_phantom} follows immediately. The key quantity is the \textit{minimum coset
weight}. For \(x\in\ker H_Z\), let \([x]\in L_X\) denote its logical class, and
define
\begin{equation}
\label{eq:def-omega}
\omega([x])
\coloneqq
\min_{s\in S_X}\operatorname{wt}(x+s).
\end{equation}
Physical qubit permutations preserve Hamming weight and map logical cosets to
logical cosets, so \(\omega([x])\) is invariant under the induced action on
logical classes. Therefore, if the permutation action is transitive on
\(L_X\setminus\{0\}\), then all nonzero logical classes must have the same
minimum logical coset weight.

The proof of Theorem~\ref{thm:hgp_transitive_action} compares this constant
weight condition with the K\"unneth decomposition of the HGP logical space.
First, the two K\"unneth sectors cannot both be nonzero: if they were, then the
sum of a nonzero class from each sector would necessarily carry irreducible
contributions from both sides, forcing its minimum logical coset weight to be
strictly larger than the common value above. This gives a contradiction.
Hence only one K\"unneth sector can remain.

It then remains to analyze the active tensor sector. If both tensor factors had
dimension at least two, then rank-two tensors would produce nonzero logical
classes whose minimum representatives cannot stay in the same weight orbit as
the decomposable tensors. This gives a second contradiction. Therefore one
factor must be one-dimensional. After removing zero coordinates, repeated
coordinates, and redundant checks, the one-dimensional factor is the repetition
code, while the remaining factor is the simplex code. This proves
Theorem~\ref{thm:hgp_transitive_action}. The full argument is given in
Appendix~\ref{app:transitive_action_proof}.

We now explain Theorem~\ref{thm:uniqueness_hgp_phantom}. After choosing a
logical \(X\)-basis, logical CNOT gates act on \(L_X\) as the elementary
transvections. For \(k=1\) there is nothing to prove, while for \(k\ge 2\) the
elementary transvections act transitively on \(L_X\setminus\{0\}\).
Theorem~\ref{thm:hgp_transitive_action} therefore implies that, up to trivial
transformations, the HGP construction must be the HGP of a simplex code and a
repetition code. Conversely, the HGP of a simplex code and a repetition code is
phantom by \cite{Koh2026Phantom}. This proves
Theorem~\ref{thm:uniqueness_hgp_phantom}.

Theorem~\ref{thm:uniqueness_hgp_phantom} therefore places a sharp HGP
restriction on the search for phantom qLDPC codes.
Ref.~\cite{Koh2026Phantom} pointed out that phantom codes realize the
zero-overhead limit for logical entangling gates and suggested that this regime
may exclude high-rate qLDPC codes. Our classification turns this general
concern into an exact structural statement within binary CSS HGP codes.
Requiring every ordered in-block logical CNOT to be implemented by a physical
qubit permutation leaves only the HGP of a simplex code and a repetition code. This result complements the broader constraints established in
Refs.~\cite{MorrisMalz2026PhantomBounds,guyot2026addressability}.
Ref.~\cite{MorrisMalz2026PhantomBounds} proves the bound
\(k\le \log_2(n+1)\) for binary phantom codes with \(d\ge 2\) and \(k\neq 4\).
Ref.~\cite{guyot2026addressability} shows that high-rate CSS codes do not
generally support logical permutations through physical permutations and
derives a related no-go result for logical CNOT and CZ gates between two such
codes under a more restrictive circuit assumption. These works characterize
general rate and addressability constraints, while
Theorem~\ref{thm:uniqueness_hgp_phantom} classifies all binary CSS HGP codes
satisfying the phantom condition. A natural next step is to determine whether
phantom qLDPC families with genuinely scalable parameters can exist in broader
code constructions.


\emph{Canonical simplex--repetition HGP family.}---
Theorem~\ref{thm:uniqueness_hgp_phantom} identifies the
simplex--repetition HGP family as the unique family to study. We choose
\(H_1\) as a parity-check matrix for the binary simplex code. Its columns are
indexed by the nonzero vectors in \(\Ftwo^k\), and its weight-three rows are
supported on triples of the form \(\{u,v,u+v\}\). Hence
\begin{equation}
    n_1=2^k-1,
    \qquad
    m_1=\frac{(2^k-1)(2^{k-1}-1)}{3}.
\end{equation}
We choose \(H_2\) as the standard parity-check matrix of the length-\(d\)
repetition code, with each row supported on two neighboring coordinates. Thus
\begin{equation}
    n_2=d,
    \qquad
    m_2=d-1.
\end{equation}
This family already appears in
Ref.~\cite[Appendix~G.1]{Koh2026Phantom}; here it arises as the unique HGP
family selected by Theorem~\ref{thm:uniqueness_hgp_phantom}.

The resulting HGP code has parameters
\begin{equation}
    [[n_1n_2+m_1m_2,\,k,\,\min\{2^{k-1},n_2\}]].
\end{equation}
The main operational feature of this family is that every ordered in-block
logical CNOT can be implemented by a physical qubit permutation and a
Pauli-frame update while the stabilizer checks remain low weight. The
weight-three checks of the simplex code and the weight-two checks of the
repetition code give an \(X\)-check weight of at most \(5\) and a
\(Z\)-check weight of \(2^{k-1}+1\). The parameter \(k\) therefore controls
both the number of logical qubits encoded in each block and the complexity of
syndrome extraction and decoding. The construction details and the complete
row and column weight bounds are given in
Appendix~\ref{app:construction_details}.

For the circuit-level study, we focus on \(k=4\), which
provides a practical balance between the number of logical qubits per block
and the stabilizer-check and column weights. Each block encodes four logical
qubits, allowing the four-qubit GHZ seed to be prepared entirely within one
HGP block, while the \(X\)- and \(Z\)-check weights are \(5\) and \(9\). For \(k=4\), the code distance is
\(\min\{8,d\}\). We therefore choose \(d=8\), which reaches the full distance supported by the simplex code and permits a
distance-matched comparison with the distance-eight rotated surface-code
baseline. In this case,
\begin{equation}
    n_{\mathrm{HGP}}=15\cdot 8+35\cdot 7=365,
\end{equation}
so the primary HGP instance in the benchmark is the
\([[365,4,8]]\) code.


\emph{Circuit benchmarks}---
We now evaluate the simplex--repetition HGP family at the circuit level. For
\(k=4\), the family has parameters
\[
[[50d-35,4,d]],
\qquad d=3,\ldots,8.
\]
The low stabilizer-check weights and moderate column weights allow these codes
to be compiled directly into syndrome-extraction circuits. 

We generate HGP syndrome-extraction circuits using the cardinal construction
of Ref.~\cite{Tremblay2022ThinPlanar} together with the balanced sign
assignment of Ref.~\cite{Kang2025QUITS}. Each edge of the product Tanner graph
is assigned one of the four directions \(E,N,S,W\), and edge coloring
partitions the corresponding CNOTs into collision-free parallel layers. The
balanced assignment distributes incident edges between opposite directions to
reduce the circuit depth. This produces bare-ancilla syndrome-extraction
circuits and the corresponding \texttt{Stim} detector error models
\cite{Gidney2021Stim}. We use this pipeline to study logical GHZ-state preparation and Trotterized many-body quantum simulation.

Both benchmarks compare the \(k=4\) simplex--repetition HGP family with
rotated surface-code baselines under the circuit-level noise model of
Ref.~\cite{Koh2026Phantom}. In the HGP circuits, in-block logical CNOTs are
implemented by physical qubit permutations and Pauli-frame updates, while
logical CNOTs between different HGP blocks are implemented transversally. The
surface-code baseline uses one rotated surface-code block per logical qubit
and implements every logical CNOT by transversal physical CNOTs between
blocks. For each benchmark, the two implementations follow the same logical
gate schedule, with one QEC cycle performed after each transversal CNOT layer
between code blocks. Both use ideal logical initialization and noiseless final
logical readout.
The HGP detector error models are decoded with
BP+LSD~\cite{Hillmann2025LSD}. The surface-code circuits are decoded with the
logical-observable MWPM decoder implemented in the open-source
\texttt{lomatching} package~\cite{Serra_Peralta_2026}. The two benchmarks therefore test the benefit of the permutation-based
implementation in logical entanglement generation and in a circuit with
repeated many-body interactions.

For both benchmarks, let \(\mathcal S\) denote the task-specific set of
decoded final logical observables with known ideal outcomes. We define the
logical infidelity by
\begin{equation}
1-F
=
\frac{m}{N|\mathcal S|},
\label{eq:main_logical_infidelity}
\end{equation}
where \(N\) is the number of Monte Carlo shots and \(m\) is the total number
of decoded logical-observable outcomes that differ from their ideal values.
Thus the logical infidelity is the average mismatch probability per target
logical observable. It should be distinguished from a per-shot logical failure
probability. Further details are given in
Appendix~\ref{app:numerical_details}.

For logical GHZ preparation, the target state is
\begin{equation}
|\operatorname{GHZ}_K\rangle_L
=
\frac{|0\rangle_L^{\otimes K}+|1\rangle_L^{\otimes K}}{\sqrt2},
\qquad
K=8,16,32,64 .
\label{eq:main_ghz_state}
\end{equation}
Each HGP block encodes four logical qubits. We first prepare the four-qubit
in-block GHZ seed using phantom CNOTs. The state is then expanded across HGP
blocks using inter-block transversal CNOT layers arranged in a
logarithmic-depth binary-tree schedule.

Figure~\ref{fig:main_ghz} compares the HGP phantom construction with rotated
surface-code baselines at \(p=10^{-3}\) and \(p=5\times10^{-4}\) under the
circuit-level noise model and decoder choices considered here. The
distance-eight HGP code outperforms the distance-eight surface-code baseline
for all simulated GHZ sizes at both physical error rates. At \(p=10^{-3}\)
and \(K=64\), the logical infidelities are \(1.25\times10^{-4}\) for the HGP
code and \(7.14\times10^{-4}\) for the surface-code baseline. At
\(p=5\times10^{-4}\) and \(K=64\), the corresponding logical infidelities are
\(1.64\times10^{-5}\) and \(4.65\times10^{-5}\).

\begin{figure}[t]
\centering
\includegraphics[width=\linewidth]{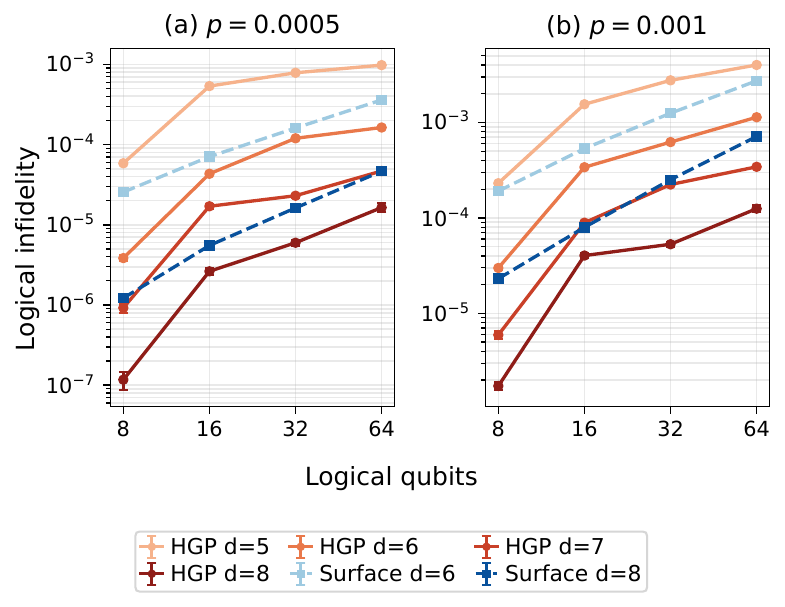}
\caption{
Logical GHZ-state preparation with the simplex--repetition HGP phantom family.
Left: \(p=5\times10^{-4}\). Right: \(p=10^{-3}\).
For each \(K\)-qubit GHZ target, HGP data use \(K/4\) four-logical-qubit
phantom blocks, whereas the surface-code baseline uses \(K\) rotated
surface-code blocks.  HGP distances are \(d=5,6,7,8\), and surface-code
baselines use \(d=6,8\).  The HGP implementation absorbs in-block CNOTs into
phantom relabellings, leaving only transversal inter-block CNOT layers followed
by QEC cycles.
}
\label{fig:main_ghz}
\end{figure}

For the Trotterized many-body benchmark, we follow
Ref.~\cite{Koh2026Phantom} and consider eight-body Ising interactions arranged
in an open-boundary brickwork pattern and interleaved with transverse-field
terms. At \(\theta=\phi=\pi/2\), the logical circuit is Clifford and can be
simulated at the stabilizer level. We retain the compiled representation rather
than algebraically simplifying the ideal unitary: each eight-body interaction
is implemented by a forward CNOT ladder, a logical \(Z\) rotation, and the
inverse CNOT ladder. This preserves the entangling structure of the benchmark
while enabling efficient stabilizer-level simulation. Figure~\ref{fig:main_trotter} shows the resulting comparison.


The distance-eight HGP code consistently improves over the distance-eight
surface-code baseline for all simulated \(K\) and \(p\).  At \(K=32\), the
logical infidelities are \(1.39\times10^{-2}\) versus \(3.17\times10^{-2}\) at
\(p=10^{-3}\), and \(2.93\times10^{-3}\) versus \(4.40\times10^{-3}\) at
\(p=5\times10^{-4}\), for HGP and surface-code implementations respectively.
The improvement is more modest than in GHZ preparation because the Trotterized
workload contains a denser inter-block entangling schedule: after the in-block
logical CNOTs are absorbed into phantom relabellings, many Trotter entangling
layers and intervening QEC cycles still remain physically active.  Even so, the data show that absorbing in-block logical CNOTs into
relabellings remains useful in a many-body circuit with active Trotter layers
and interleaved QEC.

\begin{figure}[t]
\centering
\includegraphics[width=\linewidth]{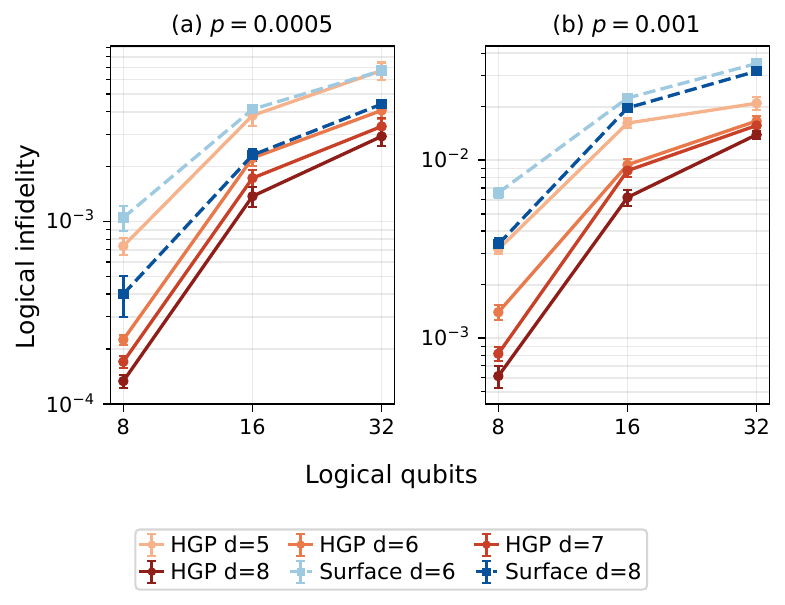}
\caption{
Trotterized many-body benchmark.
Left: \(p=5\times10^{-4}\). Right: \(p=10^{-3}\).
The HGP and surface-code configurations match those of the GHZ benchmark:
\(K/4\) four-logical-qubit HGP phantom blocks are compared with \(K\) rotated
surface-code blocks, using HGP distances \(d=5,6,7,8\) and surface-code
distances \(d=6,8\).
}
\label{fig:main_trotter}
\end{figure}

Together, the GHZ-preparation and Trotterized many-body benchmarks provide
circuit-level validation of the simplex--repetition HGP phantom family
identified by Theorem~\ref{thm:uniqueness_hgp_phantom}. In both tasks,
replacing in-block logical CNOT layers with physical qubit permutations and
Pauli-frame updates reduces the exposure to noisy entangling operations and
their associated QEC cycles. These circuit results complement the
classification by demonstrating the operational value of the
simplex--repetition family across logical state preparation and many-body
quantum simulation.

\emph{Conclusion and outlook}---
We have resolved, within binary CSS hypergraph product codes, a central
question about phantom logical entangling in fault-tolerant quantum
computation. Up to natural equivalences, the simplex--repetition family is the
unique HGP family satisfying the phantom condition. This classification shows
that requiring every ordered in-block logical CNOT to be implemented by a
physical qubit permutation imposes a strong restriction on qLDPC code design. 

We then studied this family at the circuit level through logical GHZ
preparation and Trotterized many-body quantum simulation. The
simplex--repetition family retains low-weight stabilizer checks and yields
concrete advantages over rotated surface-code baselines in both benchmarks
because its in-block logical CNOTs are implemented by physical qubit
permutations and Pauli-frame updates. More broadly, this mechanism suggests a
hardware--software co-design strategy in which selected logical entangling
operations are shifted from the quantum circuit to classical control through
relabelling of the data qubits. Such schemes may reduce two-qubit-gate depth,
routing and calibration overhead, and exposure to crosstalk and decoherence. Neutral-atom qLDPC architectures provide a promising setting for this approach,
with reconfigurable hardware supporting the remaining nonlocal operations and
phantom in-block CNOTs handled through classical relabelling.
The theorem and the two circuit benchmarks give a concrete HGP answer to the
broader question of combining phantom logical entangling with qLDPC-style
structure.

A natural next step is to determine whether phantom qLDPC families with better asymptotic parameters can arise from other standard constructions, such as balanced-product codes~\cite{breuckmann2021balanced} and lifted-product codes~\cite{panteleev2021quantum}. It would also be interesting to determine whether weak phantom codes, which support all permutation-based logical CNOTs on a selected subset of logical qubits, can form qLDPC families with scalable parameters. Such a relaxation may retain useful zero-overhead logical entangling operations while allowing greater freedom in code design, motivating the search for phantom qLDPC families with improved asymptotic parameters for low-overhead fault tolerance on neutral-atom hardware.

\textit{Acknowledgements}.---
This work was partially supported by the National Natural Science Foundation of China (Grant Nos.~92576114, 12447107) and the Guangdong Provincial Quantum Science Strategic Initiative (Grant Nos.~GDZX2403008, GDZX2503001, and GDZX2403001).

\bibliography{ref}


\vspace{4cm}
\onecolumngrid
\newpage

\appendix

\begin{center}
\Large{\textbf{Appendix --- Logical Entangling with Phantom Codes in Hypergraph Products} }
\end{center}

\appsection{app:transitive_action_proof}{Proof of Theorem~\ref{thm:hgp_transitive_action}}
\begin{proof}[Proof of Theorem~\ref{thm:hgp_transitive_action}]
The proof uses the K\"{u}nneth form of the HGP logical space. Since physical qubit permutations preserve minimum logical coset weight,
transitivity forces all nonzero logical classes to have the same minimum coset
weight. This weight constraint first eliminates one of
the two K\"{u}nneth terms, then forces the remaining term to have dimensions
\(1\times k\) or \(k\times1\). After zero and repeated coordinates are removed, the \(k\)-dimensional
classical component is a binary one-weight code and hence a simplex code.

Let $H_1\in\mathbb F_2^{r_1\times n_1},
\;
H_2\in\mathbb F_2^{r_2\times n_2}$
be two binary parity-check matrices. Denote the column and row index sets of \(H_i\) by $V_i=\{1,\ldots,n_i\},
\;
C_i=\{1,\ldots,r_i\}.$
The qubits of the hypergraph product code are divided into two disjoint sets,
\begin{equation}
Q_L=V_1\times V_2,
\qquad
Q_R=C_1\times C_2.
\end{equation}
Thus an \(X\)-type Pauli operator is represented by a pair of binary matrices $(X_L,X_R)
\in
\mathbb F_2^{n_1\times n_2}
\oplus
\mathbb F_2^{r_1\times r_2},$
where \(X_L\) is indexed by \(V_1\times V_2\), and \(X_R\) is indexed by \(C_1\times C_2\).

With this convention, the standard \(X\)-check matrix of the corresponding HGP code is
\begin{equation}
\label{eq:hx-standard}
H_X
=
\begin{bmatrix}
H_1\otimes I_{n_2}
&
I_{r_1}\otimes H_2^T
\end{bmatrix}.
\end{equation}
Its rows are indexed by \(C_1\times V_2\). Also denote $H_Z$ as the \(Z\)-check matrix of the corresponding HGP code. 

We systematically identify vectors indexed by Cartesian products with matrices. Let
\begin{equation}
\label{eq:rvec-def}
\operatorname{rvec}(X)\coloneqq \operatorname{vec}(X^T)
\end{equation}
denote row-wise vectorization. For compatible matrices \(A,X,B\), the vectorization gives
\begin{equation}
\label{eq:rvec-kron}
(A\otimes B)\operatorname{rvec}(X)
=
\operatorname{rvec}(AXB^T).
\end{equation}
In particular, for vectors \(u\in\mathbb F_2^m\) and \(v\in\mathbb F_2^n\),
\begin{equation}
\label{eq:rvec-rank-one}
u\otimes v
=
\operatorname{rvec}(uv^T).
\end{equation}
Hence, throughout the proof, we freely pass between tensor-product vectors and their reshaped matrix forms.

We first apply this convention to the row space of \(H_X\). Any binary linear combination of the rows of \(H_X\) represents an
\(X\)-type stabilizer. Write its coefficient vector as
\(\operatorname{rvec}(\Lambda)\), where
\(\Lambda\in\mathbb F_2^{r_1\times n_2}\) is indexed by
\(C_1\times V_2\). Although the rows of \(H_X\) are written as row vectors, it
is convenient to compute the transpose of their linear combination using
Eq.~\eqref{eq:rvec-kron}. The left component is
\begin{equation}
\begin{aligned}
(H_1\otimes I_{n_2})^T\operatorname{rvec}(\Lambda)
=
(H_1^T\otimes I_{n_2})\operatorname{rvec}(\Lambda) =
\operatorname{rvec}(H_1^T\Lambda),
\end{aligned}
\end{equation}
and the right component is
\begin{equation}
\begin{aligned}
(I_{r_1}\otimes H_2^T)^T\operatorname{rvec}(\Lambda)
=
(I_{r_1}\otimes H_2)\operatorname{rvec}(\Lambda) =
\operatorname{rvec}(\Lambda H_2^T).
\end{aligned}
\end{equation}
Therefore, after reshaping the vectorized components back into matrices, the \(X\)-stabilizer space is
\begin{equation}
\label{eq:sx-reshaped}
S_X
=
\operatorname{row}(H_X)
=
\left\{
\left(H_1^T\Lambda,\;\Lambda H_2^T\right):
\Lambda\in\mathbb F_2^{r_1\times n_2}
\right\}.
\end{equation}

The \(X\)-logical space has the K\"{u}nneth decomposition
\begin{equation}
\label{eq:Kunneth-lx}
L_X
\cong
\left(
\left(\mathbb F_2^{n_1}/\operatorname{im}H_1^T\right)
\otimes
\ker H_2
\right)
\oplus
\left(
\ker H_1^T
\otimes
\left(\mathbb F_2^{r_2}/\operatorname{im}H_2\right)
\right),
\end{equation}
with corresponding logical class denoted as
\begin{equation}
\begin{aligned}
    &[a]\otimes b,
\qquad
[a]\in \mathbb F_2^{n_1}/\operatorname{im}H_1^T,
\quad
b\in\ker H_2,\\
&c\otimes[d],
\qquad
c\in\ker H_1^T,
\quad
[d]\in\mathbb F_2^{r_2}/\operatorname{im}H_2.
\end{aligned}
\end{equation}
The \(X\)-logical space is also given by
\begin{equation}
L_X=\ker H_Z/S_X .
\end{equation}
For \(x\in\ker H_Z\), write
\begin{equation}
[x]=x+S_X\in L_X .
\end{equation}
Define \textit{the minimum coset weight} of \([x]\) by
\begin{equation}
\label{eq:app-def-omega}
\omega([x])
\coloneqq
\min_{s\in S_X}\operatorname{wt}(x+s).
\end{equation}

Next, we will show that the weight is well-defined on equivalence classes under permutation automorphism group $\operatorname{Aut}_{\mathrm{perm}}(\mathcal C)$. If $x'=x+s_0$
for some \(s_0\in S_X\), then $[x']=[x],$
and
\begin{equation}
\begin{aligned}
\omega([x'])
=
\min_{s\in S_X}\operatorname{wt}(x+s_0+s) =
\min_{u\in S_X}\operatorname{wt}(x+u) =
\omega([x]).
\end{aligned}
\end{equation}
Now let \(\pi\in\operatorname{Aut}_{\mathrm{perm}}(\mathcal C)\). Since \(\pi(S_X)=S_X\), the induced action on \(L_X\) is
\begin{equation}
\rho_X(\pi)([x])=[\pi (x)].
\end{equation}
Therefore
\begin{equation}
\begin{aligned}
\omega(\rho_X(\pi)([x]))
&=
\omega([\pi (x)]) \\
&=
\min_{t\in S_X}\operatorname{wt}(\pi (x)+t) \\
&=
\min_{s\in S_X}\operatorname{wt}(\pi (x)+\pi (s)) \\
&=
\min_{s\in S_X}\operatorname{wt}(\pi(x+s)) \\
&=
\min_{s\in S_X}\operatorname{wt}(x+s) \\
&=
\omega([x]).
\end{aligned}
\end{equation}
The fifth equality uses that a physical qubit permutation preserves Hamming weight. Since the group $\rho_X\bigl(\operatorname{Aut}_{\mathrm{perm}}(\mathcal C)\bigr)$
acts transitively on \(L_X\setminus\{0\}\), the quantity \(\omega([x])\) is constant on all nonzero logical classes. We denote this common value by
\begin{equation}\label{eq:w-constant}
    \omega([x])=w_0 >0 , \quad \forall 0\neq [x] \in L_X. 
\end{equation}

We next compute the minimum coset weight of decomposable logical classes in
each K\"{u}nneth summand. Consider $[a]\in \mathbb F_2^{n_1}/\operatorname{im}H_1^T,
\;
b\in\ker H_2,$ where \(a\in\mathbb F_2^{n_1}\) is a representative of the quotient class \([a]\). Define \textit{the minimum coset weight among all representatives} of the quotient class \([a]\)
\begin{equation}
\label{eq:delta-a-def}
\delta([a])
=
\min_{\gamma\in\mathbb F_2^{r_1}}
\operatorname{wt}(a+H_1^T\gamma).
\end{equation}
Using the reshaped matrix convention in Eq.~\eqref{eq:rvec-rank-one}, the tensor-product vector \(a\otimes b\), indexed by \(V_1\times V_2\), is written as the rank-one matrix $ab^T\in\mathbb F_2^{n_1\times n_2}.$
Thus the K\"{u}nneth summand \([a]\otimes b\) is represented on \(Q_L\) by \(ab^T\). It has no support on \(Q_R\), so its logical class in \(L_X\) is represented by
\begin{equation}
[(ab^T,0)].
\end{equation}
We claim that the minimum coset weight satisfies $\omega([(ab^T,0)])
=
\delta([a])\operatorname{wt}(b).$
After adding an arbitrary \(X\)-stabilizer, a representative of the same logical class has the form
\begin{equation}
\label{eq:first-class-representative}
\left(
ab^T+H_1^T\Lambda,\;
\Lambda H_2^T
\right),
\qquad
\Lambda\in\mathbb F_2^{r_1\times n_2}.
\end{equation}
Let \(\Lambda_j\in\mathbb F_2^{r_1}\) be the \(j\)-th column of \(\Lambda\). The \(j\)-th column of the left-sector matrix is
$b_ja+H_1^T\Lambda_j.$
If \(b_j=1\), this column is a representative of the quotient class \([a]\), and hence has weight at least \(\delta([a])\). There are exactly \(\operatorname{wt}(b)\) indices \(j\) with \(b_j=1\), and different columns occupy disjoint physical qubits in \(Q_L\). Therefore
\begin{equation}
\operatorname{wt}(ab^T+H_1^T\Lambda)
\ge
\delta([a])\operatorname{wt}(b).
\end{equation}
Since \(Q_L\) and \(Q_R\) are disjoint, we have
\begin{equation}
\begin{aligned}
\operatorname{wt}
\left(
ab^T+H_1^T\Lambda,\;
\Lambda H_2^T
\right)
&=
\operatorname{wt}(ab^T+H_1^T\Lambda)
+
\operatorname{wt}(\Lambda H_2^T) \\
&\ge
\delta([a])\operatorname{wt}(b)
+
\operatorname{wt}(\Lambda H_2^T) \\
&\ge
\delta([a])\operatorname{wt}(b).
\end{aligned}
\end{equation}
Since this holds for every stabilizer coefficient matrix \(\Lambda\), taking the minimum over all representatives gives
\begin{equation}
\label{eq:first-class-lower-bound}
\omega([(ab^T,0)])
\ge
\delta([a])\operatorname{wt}(b).
\end{equation}

Conversely, choose \(\gamma\in\mathbb F_2^{r_1}\) attaining the minimum in
Eq.~\eqref{eq:delta-a-def}, so that
\begin{equation}
\operatorname{wt}(a+H_1^T\gamma)
=
\delta([a]).
\end{equation}
Choose the particular \(X\)-stabilizer coefficient matrix satisfying
\begin{equation}
\Lambda_{\mathrm{min}}=\gamma b^T\in\mathbb F_2^{r_1\times n_2}.
\end{equation}
Then
\begin{equation}
H_1^T\Lambda_{\mathrm{min}}
=
H_1^T\gamma\, b^T,
\quad
\Lambda_{\mathrm{min}}H_2^T
=
\gamma b^T H_2^T
=
\gamma(H_2b)^T
=
0,
\end{equation}
for \(b\in\ker H_2\). Therefore adding the stabilizer corresponding to \(\Lambda_{\mathrm{min}}\) gives the representative
\begin{equation}
\left(
(a+H_1^T\gamma)b^T,\;
0
\right).
\end{equation}
The weight of this representative is
\begin{equation}
\begin{aligned}
\operatorname{wt}\left((a+H_1^T\gamma)b^T\right)
=
\operatorname{wt}(a+H_1^T\gamma)\operatorname{wt}(b) =
\delta([a])\operatorname{wt}(b).
\end{aligned}
\end{equation}
Thus
\begin{equation}
\omega([(ab^T,0)])
\le
\delta([a])\operatorname{wt}(b).
\end{equation}
Combining this with the lower bound in Eq.~\eqref{eq:first-class-lower-bound} gives
\begin{equation}
\label{eq:first-class-weight}
\omega([(ab^T,0)])
=
\delta([a])\operatorname{wt}(b).
\end{equation}

Similarly, consider $c\in\ker H_1^T,
\;
[d]\in\mathbb F_2^{r_2}/\operatorname{im}H_2,$
where \(d\in\mathbb F_2^{r_2}\) is a representative of \([d]\). Define
\begin{equation}
\label{eq:delta-d-def}
\delta([d])
=
\min_{\eta\in\mathbb F_2^{n_2}}
\operatorname{wt}(d+H_2\eta).
\end{equation}
The same argument gives
\begin{equation}
\label{eq:second-class-weight}
\omega([(0,cd^T)])
=
\operatorname{wt}(c)\delta([d]).
\end{equation}

While we have calculated the minimum coset weight of each K\"{u}nneth summand, we now show that they cannot both be nonzero, i.e., one of them must be zero. Suppose, for contradiction, that both types occur:
\begin{equation}
\begin{gathered}
0\neq [a]\in\mathbb F_2^{n_1}/\operatorname{im}H_1^T,
\qquad
0\neq b\in\ker H_2,\\
0\neq c\in\ker H_1^T,
\qquad
0\neq [d]\in\mathbb F_2^{r_2}/\operatorname{im}H_2,
\end{gathered}
\end{equation}
with representatives \(a\in\mathbb F_2^{n_1}\) and \(d\in\mathbb F_2^{r_2}\). The corresponding nonzero logical classes are $[(ab^T,0)]
$ and $
[(0,cd^T)].$ By Eq.~\eqref{eq:w-constant},  for all
nonzero $X$-logical classes, we have
\begin{equation}
\omega([(ab^T,0)])
=
\omega([(0,cd^T)])
=
w_0.
\end{equation}
Now consider the logical class obtained by adding these two contributions,
\begin{equation}
[(ab^T,cd^T)].
\end{equation}
It is nonzero because the two contributions lie in different sectors. After adding an arbitrary \(X\)-stabilizer, a representative of this class has the form
\begin{equation}
\left(
ab^T+H_1^T\Lambda,\;
cd^T+\Lambda H_2^T
\right).
\end{equation}
The lower-bound parts of the proofs of
Eqs.~\eqref{eq:first-class-weight} and~\eqref{eq:second-class-weight}
imply that, for every \(\Lambda\),
\begin{equation}
\begin{aligned}
\operatorname{wt}(ab^T+H_1^T\Lambda) & \geq \delta([a])\operatorname{wt}(b) = w_0, \\
\operatorname{wt}(cd^T+\Lambda H_2^T) & \geq \operatorname{wt}(c)\delta([d]) = w_0. 
\end{aligned}
\end{equation}
Since \(Q_L\) and \(Q_R\) are disjoint, every representative of \([(ab^T,cd^T)]\) has weight at least \(2w_0\). Hence
\begin{equation}
\omega([(ab^T,cd^T)])
\ge
2w_0.
\end{equation}
But \([(ab^T,cd^T)]\neq0\), so Eq.~\eqref{eq:w-constant} requires
\begin{equation}
\omega([(ab^T,cd^T)])
=
w_0.
\end{equation}
This is a contradiction. Therefore at most one K\"{u}nneth summand is nonzero.

Next we will show that the unique active tensor sector cannot have both factors of dimension greater than one. Without loss of generality, suppose that the first K\"{u}nneth summand is the
active one, with decomposable classes represented by
\begin{equation}
[(ab^T,0)],
\qquad
[a]\in\mathbb F_2^{n_1}/\operatorname{im}H_1^T,
\quad
b\in\ker H_2.
\end{equation}
Assume, for contradiction, that
\begin{equation}
\label{eq:both-factors-large-assumption}
\dim\left(\mathbb F_2^{n_1}/\operatorname{im}H_1^T\right)>1,
\qquad
\dim\ker H_2>1.
\end{equation}

For every nonzero \([a]\) and every nonzero \(b\), the class
\([(ab^T,0)]\) is nonzero. 
Fixing any nonzero \(b_0\in\ker H_2\), we get
$\delta([a])=\frac{w_0}{\operatorname{wt}(b_0)}$
for every nonzero \([a]\). Thus \(\delta([a])\) is constant on nonzero quotient classes. Similarly, fixing any nonzero \([a_0]\), we get that \(\operatorname{wt}(b)\) is constant on nonzero \(b\in\ker H_2\). 
Therefore there exist constants \(\alpha,\beta>0\) such that
\begin{equation}
\label{eq:alpha-beta-def}
\begin{aligned}
\delta([a])=\alpha
\qquad
& \forall\,0\neq [a]\in\mathbb F_2^{n_1}/\operatorname{im}H_1^T,\\
\operatorname{wt}(b)=\beta
\qquad
& \forall\,0\neq b\in\ker H_2.
\end{aligned}
\end{equation}
Then
\begin{equation}
\label{eq:D0-alpha-beta}
\omega([(ab^T,0)])
=
\delta([a])\operatorname{wt}(b) = \alpha\beta = w_0.
\end{equation}

Choose linearly independent quotient classes
\([a],[a']\in\mathbb F_2^{n_1}/\operatorname{im}H_1^T\) and linearly
independent vectors \(b,b'\in\ker H_2\). Consider
\begin{equation}
[(ab^T+a'(b')^T,0)].
\end{equation}
This class is nonzero because the tensor $[a]\otimes b+[a']\otimes b'$
is nonzero in the active K\"{u}nneth summand. After adding an arbitrary \(X\)-stabilizer, a representative has the form
\begin{equation}
(ab^T+a'(b')^T+H_1^T\Lambda, \Lambda H_2^T).
\end{equation}
The \(j\)-th column of the left part is
\begin{equation}
b_ja+b'_ja'+H_1^T\Lambda_j.
\end{equation}
Modulo \(\operatorname{im}H_1^T\), this column represents
\begin{equation}
b_j[a]+b'_j[a']\in
\mathbb F_2^{n_1}/\operatorname{im}H_1^T.
\end{equation}
Whenever \((b_j,b'_j)\neq(0,0)\), this quotient class is nonzero because \([a]\) and \([a']\) are linearly independent. Hence the \(j\)-th column has weight at least \(\alpha\). Therefore
\begin{equation}
\begin{aligned}
    \operatorname{wt}(ab^T+a'(b')^T+H_1^T\Lambda, \Lambda H_2^T) &= \operatorname{wt}(ab^T+a'(b')^T+H_1^T\Lambda) + \operatorname{wt}(\Lambda H_2^T)  \\
    &\ge \alpha
\left|
\operatorname{supp}(b)\cup\operatorname{supp}(b')
\right|.
\end{aligned}
\end{equation}
Thus
\begin{equation}
\label{eq:rank-two-lower-union}
\omega([(ab^T+a'(b')^T,0)])
\ge
\alpha
\left|
\operatorname{supp}(b)\cup\operatorname{supp}(b')
\right|.
\end{equation}

Since \(b\), \(b'\), and \(b+b'\) are nonzero elements of \(\ker H_2\), and every nonzero element of \(\ker H_2\) has weight \(\beta\), we have
\begin{equation}
\operatorname{wt}(b)
=
\operatorname{wt}(b')
=
\operatorname{wt}(b+b')
=
\beta.
\end{equation}
In particular, this also shows that 
 $\beta$ is even. 
For binary vectors,
\begin{equation}
\operatorname{wt}(b+b')
=
\operatorname{wt}(b)+\operatorname{wt}(b')
-
2
\left|
\operatorname{supp}(b)\cap\operatorname{supp}(b')
\right|.
\end{equation}
Therefore
\begin{equation}
\left|
\operatorname{supp}(b)\cap\operatorname{supp}(b')
\right|
=
\frac{\beta}{2},
\end{equation}
and hence
\begin{equation}
\left|
\operatorname{supp}(b)\cup\operatorname{supp}(b')
\right|
=
2\beta-\frac{\beta}{2}
=
\frac{3\beta}{2}.
\end{equation}
Using Eq.~\eqref{eq:rank-two-lower-union}, we get
\begin{equation}
\omega([(ab^T+a'(b')^T,0)])
\ge
\alpha\frac{3\beta}{2}
=
\frac{3}{2}w_0
>
w_0.
\end{equation}
But \([(ab^T+a'(b')^T,0)]\neq0\), so Eq.~\eqref{eq:w-constant} requires
\begin{equation}
\omega([(ab^T+a'(b')^T,0)])
=
w_0.
\end{equation}
This is a contradiction. Therefore the unique active tensor sector must have one factor of dimension one:
\begin{equation}
\label{eq:one-factor-one-dimensional}
\dim\left(\mathbb F_2^{n_1}/\operatorname{im}H_1^T\right)=1, \;\dim\ker H_2=k 
\quad\text{or}\quad
\dim\ker H_2=1, \; \dim\left(\mathbb F_2^{n_1}/\operatorname{im}H_1^T\right)=k.
\end{equation}
The same argument applies when the second K\"{u}nneth summand 
\([(0,cd^T)]\) is active. 

Finally, we identify the classical codes in the remaining case. We now show
that, after zero coordinates and repeated coordinate copies are removed, the
construction is precisely the HGP construction of a repetition code and a
simplex code. Assume first that the active K\"{u}nneth summand is represented by
\begin{equation}
[(ab^T,0)],
\qquad
[a]\in\mathbb F_2^{n_1}/\operatorname{im}H_1^T,
\quad
b\in\ker H_2,
\end{equation}
with
\begin{equation}
\label{eq:representative-final-case}
\dim\left(\mathbb F_2^{n_1}/\operatorname{im}H_1^T\right)=1,
\qquad
\dim\ker H_2=k.
\end{equation}
Equation~\eqref{eq:representative-final-case} and rank--nullity give
\begin{equation}
\dim\left(\mathbb F_2^{n_1}/\operatorname{im}H_1^T\right)
=
n_1-\operatorname{rank}H_1
=
\dim\ker H_1
=
1.
\end{equation}
After deleting coordinates that are identically zero on \(\ker H_1\), the unique
nonzero codeword of \(\ker H_1\) has full support. Thus, up to a coordinate
permutation, \(\ker H_1\) is generated by the all-one vector and is a repetition
code.

We next identify the \(k\)-dimensional code. Let $0\neq [a_0]\in
\mathbb F_2^{n_1}/\operatorname{im}H_1^T$
be the unique nonzero class, and choose a representative
\(a_0\in\mathbb F_2^{n_1}\). For every nonzero \(b\in\ker H_2\), the logical
class \([(a_0b^T,0)]\) is nonzero. Hence Eqs.~\eqref{eq:first-class-weight}
and~\eqref{eq:w-constant} imply
\begin{equation}
   \omega([(a_0b^T,0)])
=
\delta([a_0])\operatorname{wt}(b)
=
w_0 . 
\end{equation}
Since \(\delta([a_0])>0\) is fixed, all nonzero codewords of
\(\ker H_2\) have the same Hamming weight. Hence \(\ker H_2\) is a binary
linear one-weight code. By the classification of linear one-weight codes~\cite{Bonisoli1984Equidistant,Kiermaier_2023}, it is
permutation-equivalent to a replication of the binary simplex code, possibly
with additional zero coordinates. Therefore, after deleting zero coordinates
and retaining one representative from each repeated coordinate class,
\(\ker H_2\) is the binary simplex code
\([2^k-1,k,2^{k-1}].\)

The case with dimensions \(\dim\ker H_2=1, \; \dim\left(\mathbb F_2^{n_1}/\operatorname{im}H_1^T\right)=k\), as well as the cases represented by
\([(0,cd^T)]\), are obtained by exchanging the two input matrices or by passing
to the transpose-side representation. Thus, apart from zero coordinates and repeated coordinate copies, the only
possibility compatible with the transitive logical permutation action is the HGP
construction built from a simplex code and a repetition code. This completes the proof.

\end{proof}


\appsection{app:construction_details}{The simplex--repetition HGP family}

\appsubsection{app:simplex_repetition_construction}
{The simplex--repetition HGP construction}

We give a description of the simplex--repetition construction
used in the main text, following
Ref.~\cite[Appendix~G.1]{Koh2026Phantom}. For distinct
\(u,v\in\Ftwo^k\setminus\{0\}\), define
\begin{equation}
    \ell(u,v)=\{u,v,u+v\}.
\end{equation}
Over \(\Ftwo\), each such set contains three distinct nonzero vectors. We
define \(H_1\) to have columns indexed by the nonzero vectors in \(\Ftwo^k\)
and rows indexed by the distinct sets \(\ell(u,v)\), with
\begin{equation}
    (H_1)_{\ell,p}
    =
    \begin{cases}
    1, & p\in\ell,\\
    0, & p\notin\ell .
    \end{cases}
\end{equation}
There are $n_1=2^k-1$
columns. Each line is generated by six ordered pairs of distinct nonzero
vectors, giving
\begin{equation}
    m_1
    =
    \frac{(2^k-1)(2^k-2)}{6}
    =
    \frac{(2^k-1)(2^{k-1}-1)}{3}.
\end{equation}
Every row of \(H_1\) has weight three, while every column has weight
\(2^{k-1}-1\). Although this presentation contains redundant parity checks,
it is useful for the HGP construction because it combines weight-three rows
with a permutation action of the full group \(\GL(k,\Ftwo)\).

A vector $f:\Ftwo^k\setminus\{0\}\longrightarrow\Ftwo$
lies in \(\ker H_1\) precisely when
\begin{equation}
    f(u)+f(v)+f(u+v)=0
\end{equation}
for every pair of distinct nonzero vectors \(u,v\). Extending \(f\) by
\(f(0)=0\), this condition becomes
\begin{equation}
    f(u+v)=f(u)+f(v)
\end{equation}
for all \(u,v\in\Ftwo^k\). Hence \(f\) is a linear functional on
\(\Ftwo^k\). It follows that \(\ker H_1\) is the binary simplex code with
parameters
\begin{equation}
    [2^k-1,\,k,\,2^{k-1}].
\end{equation}

For the second classical code, we use the standard parity-check matrix of the
length-\(n_2\) repetition code,
\begin{equation}
    (H_2)_{i,j}
    =
    \begin{cases}
    1, & j\in\{i,i+1\},\\
    0, & \text{otherwise},
    \end{cases}
    \qquad
    1\le i\le n_2-1 .
\end{equation}
Thus $m_2=n_2-1.$
Every row of \(H_2\) has weight two, and
\(\ker H_2\) is the repetition code with parameters
\([n_2,1,n_2]\). Since \(H_2\) has full row rank, the transpose-side logical
contribution in the HGP construction vanishes. For this pair of classical
codes, the HGP construction has parameters
\begin{equation}
    [[n_1n_2+m_1m_2,\,k,\,\min\{2^{k-1},n_2\}]].
\end{equation}
The logical dimension is inherited from the \(k\)-dimensional simplex code,
while the distance is limited by the smaller of the simplex-code distance and
the repetition-code length.

We next describe the physical permutations responsible for the phantom
logical CNOTs. For any \(A\in\GL(k,\Ftwo)\), the map
\begin{equation}
    p\longmapsto Ap
\end{equation}
permutes the nonzero vectors in \(\Ftwo^k\). It also permutes the weight-three
checks because
\begin{equation}
    \{u,v,u+v\}
    \longmapsto
    \{Au,Av,A(u+v)\}.
\end{equation}
Let \(\sigma_A\) denote the induced permutation of the columns of \(H_1\), and
let \(\pi_A\) denote the corresponding permutation of its rows. Their
compatibility is expressed by
\begin{equation}
    H_1\sigma_A=\pi_AH_1.
\end{equation}

To lift this symmetry of \(H_1\) to a physical qubit permutation of the HGP
code, recall that the HGP qubits are indexed by the disjoint union
\begin{equation}
    (V_1\times V_2)\sqcup(C_1\times C_2),
\end{equation}
where \(V_1\) and \(C_1\) are the column and row index sets of \(H_1\), and
\(V_2\) and \(C_2\) are those of \(H_2\). Thus, a qubit of the first type is
labelled by \((p,j)\), where \(p\) is a nonzero vector in \(\Ftwo^k\) and
\(j\) is a coordinate of the repetition code. A qubit of the second type is
labelled by \((\ell,i)\), where \(\ell\) is a simplex-code check and \(i\) is
a repetition-code check.

Since \(A\) permutes both the points and the projective lines associated with
\(H_1\), it induces the following permutation of the HGP qubits:
\begin{equation}
    (p,j)\longmapsto(Ap,j),
    \qquad
    (\ell,i)\longmapsto(A\ell,i),
\end{equation}
where
\begin{equation}
    A\ell=\{Au,Av,A(u+v)\}
\end{equation}
for \(\ell=\{u,v,u+v\}\). The transformation acts only on the
simplex-code indices and leaves the repetition-code indices unchanged.

We now verify that this qubit permutation preserves the CSS stabilizer group.
Under this permutation, the columns of
\begin{equation}
    H_X=
    [\,H_1\otimes I_{n_2}\mid I_{m_1}\otimes H_2^{T}\,]
\end{equation}
indexed by \(V_1\times V_2\) are permuted by \(\sigma_A\) on the \(V_1\)
coordinate, while those indexed by \(C_1\times C_2\) are permuted by
\(\pi_A\) on the \(C_1\) coordinate. The relation
\(H_1\sigma_A=\pi_AH_1\) shows that the transformed matrix differs from
\(H_X\) only by a permutation of its rows on the \(C_1\) coordinate.
Therefore its row space is preserved.

Similarly, for
\begin{equation}
    H_Z=
    [\,I_{n_1}\otimes H_2\mid H_1^{T}\otimes I_{m_2}\,],
\end{equation}
the transformed matrix differs from \(H_Z\) only by a permutation of its rows
on the \(V_1\) coordinate. Hence the row spaces of both \(H_X\) and \(H_Z\)
are preserved, and every \(A\in\GL(k,\Ftwo)\) induces a CSS-preserving
physical qubit permutation of the HGP code.

Choose a logical basis and label the logical Pauli operators by
\(a,b\in\Ftwo^k\), writing them as \(X(a)\) and \(Z(b)\). The induced logical
action is
\begin{equation}
    X(a)\longmapsto X(Aa),
    \qquad
    Z(b)\longmapsto Z(A^{-T}b).
\end{equation}
For any ordered pair \(i\neq j\), choose the elementary transvection
\begin{equation}
    A=I+e_je_i^{T}.
\end{equation}
This transformation sends \(e_i\) to \(e_i+e_j\) and fixes every
\(e_\ell\) with \(\ell\neq i\). It therefore induces the logical action of
\(\overline{\mathrm{CNOT}}_{i j}\). Varying \(i\) and \(j\) realizes every
ordered in-block logical CNOT through a physical qubit permutation, with the
logical Pauli frame updated accordingly. This establishes the phantom property
of the simplex--repetition HGP family.

\appsubsection{app:check_weight_bounds}{Check-weight bounds and representative family members}

For a CSS code, the maximum row weights determine the largest stabilizer
checks and the number of two-qubit gates required for their direct
syndrome measurement. The maximum column weights count how many
stabilizer checks are incident on a physical qubit and therefore influence
circuit scheduling and the complexity of the corresponding decoding problem.
We now evaluate these quantities for the simplex--repetition HGP family.

For a binary matrix \(M\), let \(\max\operatorname{rowwt}(M)\) and
\(\max\operatorname{colwt}(M)\) denote the maximum row weight and maximum
column weight of \(M\), respectively.

\begin{theorem}[Row and column weight bounds]
\label{thm:row_col_weight_bounds_polished}
Let \(H_1\) be the simplex parity-check matrix whose weight-three rows are
supported on triples \(\{u,v,u+v\}\), and let \(H_2\) be the standard
nearest-neighbor parity-check matrix of the repetition code. Then the
\(X\)- and \(Z\)-check matrices of the simplex--repetition HGP construction
satisfy
\begin{equation}
\max\operatorname{rowwt}(H_X)=
\begin{cases}
4, & n_2=2,\\
5, & n_2\ge 3,
\end{cases}
\end{equation}
\begin{equation}
\max\operatorname{colwt}(H_X)=
\max\left\{\frac{n_1-1}{2},\,2\right\},
\end{equation}
\begin{equation}
\max\operatorname{rowwt}(H_Z)=\frac{n_1+3}{2},
\qquad
\max\operatorname{colwt}(H_Z)=3.
\end{equation}
\end{theorem}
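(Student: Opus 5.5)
The proof is a direct count of nonzero entries in the four blocks of
\(H_X=[\,H_1\otimes I_{n_2}\mid I_{m_1}\otimes H_2^{T}\,]\) and
\(H_Z=[\,I_{n_1}\otimes H_2\mid H_1^{T}\otimes I_{m_2}\,]\). We use the standard facts that the row and column weights of a Kronecker product multiply, and that for a horizontally concatenated matrix the row weights add while the column weights are taken blockwise.

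We first record the classical data. The rows of \(H_1\) have weight \(3\). A point \(p\) lies on exactly \((2^k-2)/2=2^{k-1}-1=(n_1-1)/2\) lines \(\{p,v,p+v\}\), so every column of \(H_1\) has weight \((n_1-1)/2\). The rows of \(H_2\) have weight \(2\). Its columns have weight \(1\) at the two endpoints and weight \(2\) at interior coordinates, so the maximum column weight of \(H_2\) is \(2\) when \(n_2\ge3\) and \(1\) when \(n_2=2\). Transposing, \(H_2^T\) has row weights in \(\{1,2\}\) (maximum \(2\) iff \(n_2\ge3\)) and column weight \(2\).

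\(X\)-checks. A row indexed by \((c,j)\in C_1\times V_2\) has weight \(3\) from \(H_1\otimes I\), plus the weight of row \(j\) of \(H_2^T\), which is column \(j\) of \(H_2\). The maximum row weight is therefore \(3+2=5\) for \(n_2\ge3\) and \(3+1=4\) for \(n_2=2\). For the columns, a column in the left block has weight equal to the column weight of \(H_1\), namely \((n_1-1)/2\). A column in the right block has weight equal to the column weight of \(H_2^T\), namely \(2\). Hence the maximum column weight is \(\max\{(n_1-1)/2,2\}\).

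\(Z\)-checks. A row indexed by \((p,i)\) has weight \(2\) from \(I\otimes H_2\), plus the column weight of \(H_1\), giving \(2+(n_1-1)/2=(n_1+3)/2\). For the columns, the left block has column weights equal to those of \(H_2\), with maximum \(2\) (or \(1\) when \(n_2=2\)). The right block \(H_1^T\otimes I\) has column weight equal to the row weight of \(H_1\), namely \(3\). The maximum column weight is therefore \(3\).

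There is no real obstacle. The only points requiring care are, first, the incidence count \(2^{k-1}-1\), which uses that each line through \(p\) corresponds to an unordered pair \(\{v,p+v\}\) with \(v\notin\{0,p\}\). Second, the boundary case \(n_2=2\) of the repetition code has to be tracked separately, and it matters only for the maximum row weight of \(H_X\). Finally, the \(\max\) in the column-weight formula for \(H_X\) accounts for \(k=2\), where \((n_1-1)/2=1<2\).
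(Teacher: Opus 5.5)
Your proposal is correct and follows the paper's proof. Both count blockwise: row weights of the concatenated blocks add, and column weights are read off each Kronecker block using the incidence count $(n_1-1)/2$ for $H_1$ and the endpoint/interior distinction for the repetition code. You justify the incidence count and the $n_2=2$ boundary case a bit more explicitly than the paper does, but the argument is the same.
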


\begin{proof}
For
\[
H_X=
[\,H_1\otimes I_{n_2}\mid I_{m_1}\otimes H_2^{T}\,],
\]
each row contains the three qubits supported by a simplex check together with
the qubits incident on the corresponding coordinate of the repetition code.
An endpoint coordinate contributes one additional qubit, while an interior
coordinate contributes two. The maximum \(X\)-check weight is therefore \(5\)
for \(n_2\ge3\). When \(n_2=2\), both repetition-code coordinates are
endpoints, giving a maximum weight of \(4\). A column in the first block of \(H_X\) has weight equal to the number of
simplex checks containing a fixed coordinate. Every nonzero vector in
\(\Ftwo^k\) belongs to
\((n_1-1)/2\) triples of the form \(\{u,v,u+v\}\). A column in the second block has weight
\(2\), inherited from the repetition-code parity checks. This gives \(\max\operatorname{colwt}(H_X)=
\max\left\{\frac{n_1-1}{2},\,2\right\}.\)

For
\[
H_Z=
[\,I_{n_1}\otimes H_2\mid H_1^{T}\otimes I_{m_2}\,],
\]
each row receives weight \(2\) from \(H_2\) and weight \((n_1-1)/2\) from the
column weight of \(H_1\), so
\(\max\operatorname{rowwt}(H_Z)=(n_1+3)/2\). Columns in the first block have weight at most \(2\), while columns in the
second block inherit the weight-three simplex checks,
which gives \(\max\operatorname{colwt}(H_Z)=3\).
\end{proof}

Theorem~\ref{thm:row_col_weight_bounds_polished} separates the roles of the
simplex dimension and the repetition length. Once \(n_2\ge3\), the maximum
\(X\)-check weight remains \(5\), and the maximum \(Z\)-column weight remains
\(3\), independently of \(n_2\). Increasing the repetition length therefore
raises the block length and the code distance, up to
\(n_2=2^{k-1}\), without increasing these local weights.

The dependence on \(k\) is different. Since \(n_1=2^k-1\), for \(k\ge3\) the
maximum \(X\)-column weight and \(Z\)-check weight are
\begin{equation}
\max\operatorname{colwt}(H_X)=2^{k-1}-1,
\qquad
\max\operatorname{rowwt}(H_Z)=2^{k-1}+1.
\end{equation}
Increasing \(k\) provides more logical qubits in each HGP block, while also
increasing the number of checks incident on some qubits and the size of the
largest \(Z\)-checks. This tradeoff directly affects syndrome-extraction
scheduling and decoding complexity.

For \(k=4\), the maximum row weights of \(H_X\) and \(H_Z\) are \(5\) and
\(9\), while the corresponding maximum column weights are \(7\) and \(3\).
These values remain unchanged as the repetition length increases from
\(n_2=3\) to \(n_2=8\). The \(k=4\) family therefore provides four logical
qubits per block while retaining moderate check and column weights. Its
distance-eight member also reaches the maximum distance \(2^{k-1}=8\)
supported by the simplex code, which motivates the \([[365,4,8]]\) instance
used as the primary circuit-level benchmark.

Table~\ref{tab:code_params_polished} lists representative members across
several values of \(k\) and \(d\). At fixed \(k\), increasing \(d\) changes
the block length and distance while leaving the check and column weights
unchanged for \(d\ge3\). The \(k=5,6\) examples illustrate the stronger growth
with the number of logical qubits per block.

\begin{table}[ht]
\begin{ruledtabular}
\begin{tabular}{ccccccc}
\(n\) & \(k\) & \(d\)
& \(\max\operatorname{rowwt}(H_X)\)
& \(\max\operatorname{colwt}(H_X)\)
& \(\max\operatorname{rowwt}(H_Z)\)
& \(\max\operatorname{colwt}(H_Z)\)
\\
\colrule
\(7\)   & \(2\) & \(2\) & \(4\) & \(2\)  & \(3\)  & \(3\) \\
\(21\)  & \(3\) & \(2\) & \(4\) & \(3\)  & \(5\)  & \(3\) \\
\(35\)  & \(3\) & \(3\) & \(5\) & \(3\)  & \(5\)  & \(3\) \\
\(49\)  & \(3\) & \(4\) & \(5\) & \(3\)  & \(5\)  & \(3\) \\
\colrule
\(65\)  & \(4\) & \(2\) & \(4\) & \(7\)  & \(9\)  & \(3\) \\
\(115\) & \(4\) & \(3\) & \(5\) & \(7\)  & \(9\)  & \(3\) \\
\(165\) & \(4\) & \(4\) & \(5\) & \(7\)  & \(9\)  & \(3\) \\
\(215\) & \(4\) & \(5\) & \(5\) & \(7\)  & \(9\)  & \(3\) \\
\(265\) & \(4\) & \(6\) & \(5\) & \(7\)  & \(9\)  & \(3\) \\
\(315\) & \(4\) & \(7\) & \(5\) & \(7\)  & \(9\)  & \(3\) \\
\(365\) & \(4\) & \(8\) & \(5\) & \(7\)  & \(9\)  & \(3\) \\
\colrule
\(1333\)  & \(5\) & \(8\)  & \(5\) & \(15\) & \(17\) & \(3\) \\
\(2821\)  & \(5\) & \(16\) & \(5\) & \(15\) & \(17\) & \(3\) \\
\(5061\)  & \(6\) & \(8\)  & \(5\) & \(31\) & \(33\) & \(3\) \\
\(10773\) & \(6\) & \(16\) & \(5\) & \(31\) & \(33\) & \(3\) \\
\(22197\) & \(6\) & \(32\) & \(5\) & \(31\) & \(33\) & \(3\) \\
\end{tabular}
\end{ruledtabular}
\caption{
Representative members of the simplex--repetition HGP family.
The \(k=4\), \(d=5,6,7,8\) rows are the instances used in the circuit-level
benchmark, with the \([[365,4,8]]\) code serving as the primary
representative. The \(k=5,6\) rows illustrate the growth of the block length
and of the dominant check and column weights as the number of logical qubits
per block is increased.
}
\label{tab:code_params_polished}
\end{table}


\appsection{app:numerical_details}{Numerical benchmark details}

We use the circuit-level noise model of
Ref.~\cite{Koh2026Phantom}, reproduced in Table~\ref{tab:noise}. The model
includes idle, one-qubit gate, two-qubit gate, measurement, and reset errors
with the relative rates listed below.

\begin{table}[t]
\centering
\begin{tabular}{lll}
\toprule
\textbf{Error mechanism} & \textbf{Model} & \textbf{Relative rate} \\
\midrule
Idle error & 1-qubit depolarizing & \(p/300\) \\
1-qubit gate error & 1-qubit depolarizing & \(p/15\) \\
2-qubit gate error & 2-qubit depolarizing & \(p\) \\
Measurement error & Classical bit-flip & \(5p/3\) \\
Reset error & Qubit bit-/phase-flip & \(p\) \\
\bottomrule
\end{tabular}
\caption{
Circuit-level noise model used in Ref.~\cite{Koh2026Phantom} and adopted here.
}
\label{tab:noise}
\end{table}

Both benchmarks compare the \(k=4\) simplex--repetition HGP family with
rotated surface-code baselines under the noise model in
Table~\ref{tab:noise}. Each HGP block encodes four logical qubits. In-block
logical CNOTs are implemented by physical qubit permutations and Pauli-frame
updates, while logical CNOTs between different HGP blocks are implemented
transversally. One QEC cycle is performed on the participating HGP blocks
after each inter-block transversal CNOT layer. The HGP syndrome-extraction
circuits are generated using the sign-balanced cardinal construction, and the
resulting detector error models are decoded using BP+LSD with LSD order zero.

For the surface-code baseline, each logical qubit is encoded in one rotated
surface-code block, and every logical CNOT is implemented by transversal
physical CNOTs between the corresponding blocks. The surface-code circuits
follow the same logical circuits and QEC cadence as their HGP counterparts.
Standard rotated surface-code syndrome-extraction circuits are generated using
\texttt{Stim}, and the resulting detector data are decoded with the
logical-observable MWPM decoder implemented in the open-source
\texttt{lomatching} package~\cite{Serra_Peralta_2026}. Both implementations
use ideal logical initialization and noiseless final logical readout.

For both benchmarks, let \(\mathcal S\) denote the task-specific set of
decoded final logical stabilizer observables with outcomes fixed by the ideal
noiseless circuit. We report the logical infidelity
\begin{equation}
    1-F
    =
    \frac{m}{N|\mathcal S|},
\label{eq:appendix_logical_infidelity}
\end{equation}
where \(N\) is the number of Monte Carlo shots and \(m\) is the total number
of decoded logical-observable outcomes that differ from their ideal values.
Thus the logical infidelity is the average mismatch probability per target
logical observable. Binomial standard errors are computed directly from the
same mismatch counts.

We consider the two tasks as stated in the main text.

\paragraph{Logical GHZ preparation.}
The target state is
\begin{equation}
    |\mathrm{GHZ}_K\rangle_L
    =
    \frac{|0\rangle_L^{\otimes K}+|1\rangle_L^{\otimes K}}{\sqrt{2}},
    \qquad
    K=8,16,32,64.
\end{equation}
The initial logical state is
\begin{equation}
    |+000\rangle_L
    |0000\rangle_L\cdots|0000\rangle_L.
\end{equation}
Within the first HGP block, a four-qubit logical GHZ seed is prepared using the
in-block CNOTs \(1\to2\), \(1\to3\), and \(2\to4\). The state is then
expanded across the remaining HGP blocks using inter-block transversal CNOT
layers arranged in a logarithmic-depth binary-tree schedule.

For each Monte Carlo shot, we decode the \(K\) logical GHZ stabilizers
\begin{equation}
\mathcal S_{\mathrm{GHZ}}
=
\{
\bar Z_1\bar Z_2,\,
\bar Z_2\bar Z_3,\,
\ldots,\,
\bar Z_{K-1}\bar Z_K,\,
\bar X_1\bar X_2\cdots\bar X_K
\}.
\end{equation}
All ideal outcomes are \(+1\). For this benchmark,
\(\mathcal S=\mathcal S_{\mathrm{GHZ}}\) and
\(|\mathcal S_{\mathrm{GHZ}}|=K\).

The plotted values are computed directly from the observable-mismatch counts.
For example, at \(p=10^{-3}\) and \(K=64\), the \([[365,4,8]]\) HGP code has
\(160\) mismatches out of \(1.28\times10^6\) decoded observables, giving a
logical infidelity of \(1.25\times10^{-4}\). The distance-eight surface-code
baseline has \(146143\) mismatches out of \(2.048\times10^8\) decoded
observables, giving a logical infidelity of \(7.14\times10^{-4}\). At
\(p=5\times10^{-4}\) and \(K=64\), the corresponding logical infidelities are
\(1.64\times10^{-5}\) for the HGP code and \(4.65\times10^{-5}\) for the
surface-code baseline.

\paragraph{Trotterized many-body simulation.}
We follow the circuit family of Ref.~\cite{Koh2026Phantom}. The logical circuit
simulates Trotterized time evolution generated by eight-body Ising interaction
terms interleaved with single-qubit transverse-field terms. Each interaction
layer contains terms of the form
\begin{equation}
    U_Z^{(j)}(\theta)
    =
    \exp\!\left(
    -i\theta\,
    \bar Z_j\bar Z_{j+1}\cdots\bar Z_{j+7}
    \right),
\end{equation}
together with the transverse-field layer
\begin{equation}
    U_X(\phi)
    =
    \prod_j \exp(-i\phi\bar X_j).
\end{equation}
We retain the compiled
forward-ladder--rotation--inverse-ladder circuit at the Clifford-compatible
angles \(\theta=\phi=\pi/2\), rather than algebraically simplifying the
resulting ideal unitary.  This preserves the entangling-gate structure of the
Trotter workload while enabling stabilizer-level noisy-circuit simulation.

The initial logical state is
\begin{equation}
    |++++0000\cdots ++++0000\rangle_L.
\end{equation}
The eight-body interactions are arranged in an open-boundary brickwork
pattern. The first interaction layer acts on the disjoint groups $\{1,\ldots,8\},\,
    \{9,\ldots,16\},\,
    \ldots,$
and the shifted layer acts on $ \{5,\ldots,12\},\,
    \{13,\ldots,20\},\,
    \ldots,$
with groups extending beyond the logical chain omitted. One Trotter cycle
consists of the unshifted interaction layer, the shifted interaction layer,
and the transverse-field layer. We simulate 2 Trotter cycles.

Each eight-body interaction is decomposed into a forward logical CNOT ladder,
a logical \(Z\) rotation on the parity qubit, and the inverse CNOT ladder. The
single-logical-qubit \(X\) and \(Z\) operations arising from the
Clifford-angle circuit are tracked in the logical Pauli frame and incorporated
into the expected outcomes of the final logical observables.

At the end of the circuit, we decode the complete set of logical stabilizer
observables
\begin{equation}
\mathcal S_{\mathrm{Trotter}}
=
\{
\bar X_1,\bar X_2,\bar X_3,\bar X_4,\,
\bar Z_5,\bar Z_6,\bar Z_7,\bar Z_8,\,
\ldots,\,
\bar Z_{K-3},\bar Z_{K-2},\bar Z_{K-1},\bar Z_K
\}.
\end{equation}
For each Monte Carlo shot, we compare their decoded outcomes with the noiseless
expected values, including the accumulated Pauli frame. For this benchmark,
\(\mathcal S=\mathcal S_{\mathrm{Trotter}}\) in
Eq.~\eqref{eq:appendix_logical_infidelity}.

We simulate two Trotter cycles for \(K=8,16,32\) logical qubits at \(p=10^{-3}\) and \(p=5\times10^{-4}\).  The
distance-eight HGP code improves over the distance-eight surface-code baseline
for all simulated instances.  At \(p=10^{-3}\) and \(K=32\), the logical
infidelities are \(1.39\times10^{-2}\) for the HGP code and
\(3.17\times10^{-2}\) for the surface-code baseline.  At
\(p=5\times10^{-4}\) and \(K=32\), the corresponding logical infidelities are
\(2.93\times10^{-3}\) and \(4.40\times10^{-3}\).

The separation is more modest than in the GHZ benchmark, as expected from the
structure of the workload.  In GHZ preparation, the in-block fanout seed is a
large part of the entangling structure and is directly absorbed into phantom
relabellings and Pauli-frame updates.  In the Trotterized circuit, by contrast,
many Trotter entangling layers and intervening QEC cycles remain active after
the in-block phantom relabellings are applied.  The benchmark therefore probes a
less favorable regime for phantomness than GHZ fanout, but the HGP data still
show a consistent circuit-level reduction in logical infidelity.  This indicates
that the phantom advantage is not limited to GHZ-style state preparation, while
also showing that its size is controlled by the fraction of the workload that
can be converted into relabellings.

\end{document}